\theoremstyle{plain}
\newtheorem{theorem}{Theorem}
\begin{document}

\title{Distributed Gaussian Mixture PHD Filtering under Communication Constraints}

\author{Shiraz Khan, Yi-Chieh Sun, and Inseok Hwang\thanks{The authors are with the School of Aeronautics and Astronautics, Purdue University, West Lafayette, IN 47906, USA. (email: \texttt{shiraz@purdue.edu})}
\thanks{\textcolor{blue}{© 2023 IEEE. Personal use of this material is permitted. Permission from IEEE must be obtained for all other uses, in any current or future media, including reprinting/republishing this material for advertising or promotional purposes, creating new collective works, for resale or redistribution to servers or lists, or reuse of any copyrighted component of this work in other works.}}
}

\markboth{IEEE Transactions on Signal Processing}%
{Khan \MakeLowercase{\textit{et al.}}: Distributed Gaussian Mixture PHD Filtering under Communication Constraints}


\maketitle

\begin{abstract}
The Gaussian Mixture Probability Hypothesis Density (GM-PHD) filter is an almost exact closed-form approximation to the Bayes-optimal multi-target tracking algorithm. Due to its optimality guarantees and ease of implementation, it has been studied extensively in the literature. However, the challenges involved in implementing the GM-PHD filter efficiently in a distributed (multi-sensor) setting have received little attention. The existing solutions for distributed PHD filtering either have a high computational and communication cost, making them infeasible for resource-constrained applications, or are unable to guarantee the asymptotic convergence of the distributed PHD algorithm to an optimal solution.     
In this paper, we develop a distributed GM-PHD filtering recursion that uses a probabilistic communication rule to limit the communication bandwidth of the algorithm, while ensuring asymptotic optimality of the algorithm. We derive the convergence properties of this recursion, which uses weighted average consensus of Gaussian mixtures (GMs) to lower (and asymptotically minimize) the Cauchy-Schwarz divergence between the sensors' local estimates. In addition, the proposed method is able to avoid the issue of false positives, which has previously been noted to impact the filtering performance of distributed multi-target tracking. Through numerical simulations, it is demonstrated that our proposed method is an effective solution for distributed multi-target tracking in resource-constrained sensor networks.
\end{abstract}

\begin{IEEEkeywords}
multi-target tracking, sensor networks, distributed state estimation, Gaussian mixture
\end{IEEEkeywords}

\maketitle

\section{Introduction}
\label{sec:introduction}
\IEEEPARstart{M}{ulti-target} tracking refers to the problem of estimating the states of an unknown, time-varying number of dynamical systems (which are called \textit{targets}) using noisy measurements.
Multi-target tracking algorithms have received considerable attention in the last two decades, with new domains of application being formed, such as tracking of multiple autonomous vehicles or monitoring pedestrian traffic using cameras \cite{rangesh2019no, fu2019multi}. In particular, the Gaussian Mixture Probability Hypothesis Density (GM-PHD) filter is popular for its ease of analysis and implementation \cite{vo2006gmphd}. It models the uncertain multi-target state as a Poisson random finite set (RFS), whose intensity is represented using a Gaussian mixture (GM), and is able to track the targets without requiring explicit measurement-to-track or track-to-track data associations.
As the GM-PHD filter was developed on a solid theoretical foundation, its optimality and convergence guarantees are better understood than heuristic algorithms or data-driven approaches to multi-target tracking. 

In the multi-sensor multi-target tracking problem, a network of spatially separated sensors are required to carry out multi-target tracking cooperatively. By sharing information across the communication channels of the sensor network, the uncertainty in the multi-target estimate of each sensor can be reduced \cite{gostar2017cauchy}. This is especially true when the sensors have different sensing capabilities or areas of coverage, such as cameras mounted on drones which are overlooking different parts of the surveillance region \cite{battistelli2020differentFOVs}. A naive extension of the GM-PHD filter to the multi-sensor setting requires an excessive amount of communication to be carried out between the sensors. A much more efficient solution involves running local GM-PHD filters in parallel, followed by a fusion step for combining the sensors' estimates \cite{li2018cardinality, li2018partial, wu2022partial}. This approach is analogous to that of multi-sensor Kalman filtering with consensus or diffusion steps for single-target tracking \cite{talebi2018distributed}, but has the additional capability of being able to track a dynamically varying number of targets without requiring explicit measurement-to-track or track-to-track data association \cite{li2018partial}.

Recently, several authors have observed that the multi-sensor multi-target tracking problem can be solved by combining local GM-PHD filtering with a weighted arithmetic average (WAA) fusion step, in which each sensor updates its estimate to a weighted average of the estimates of the sensor network. WAA fusion can be realized in a distributed manner by using one or more weighted average consensus iterations, such that each consensus iteration only requires the sensors that are connected to communicate with each other.
In \cite{li2018cardinality}, the authors used an average consensus step to fuse the estimated number of targets in the surveillance region (i.e., the estimated cardinality of the RFS), which was shown to improve the overall local PHD filtering performance. In \cite{li2020parallel, li2021is}, and \cite{wu2022partial}, WAA fusion was used to fuse the posterior intensities of the local GM-PHD filters, rather than just the cardinality estimates. The theoretical justification for WAA fusion of intensities was noted by the authors of \cite{battistelli2020differentFOVs}, who observed that it minimizes a cost function based on the Cauchy-Schwarz divergence between RFSs. The authors of \cite{gao2020multiobject} and \cite{gostar2017cauchy} have further connected WAA fusion to other types of cost functions, while contrasting it with the alternative fusion strategy: weighted geometric average (WGA) fusion. Notably, WAA fusion is easier to implement and leads to less false negatives (i.e., missed detections) than WGA fusion, resulting in more consistent cardinality estimates \cite{li2021cphd, gostar2020cooperative}. In \cite{julier2012conservative}, it was noted that WAA fusion also has a desirable consistency/conservativeness property which makes it robust in the presence of unknown cross-correlations between the sensors in a multi-sensor network. However, to the best of our knowledge, none of the existing works has discussed the convergence properties of the weighted average consensus approach, in which consensus is sought over real-valued functions; this is different from consensus of vectors in Euclidean spaces, which has been well-studied in the literature \cite{olfati2007consensus,corless2012consensus}.

The weighted average consensus approach requires each sensor to broadcast all of its GM components to its neighbors, which typically outnumber the targets in the surveillance region, leading to the algorithm incurring a high communication cost at the sensors. Moreover, WAA fusion is known to suffer from a high number of false positives (i.e., the misidentification of background clutter as targets), which can be especially problematic in distributed sensor networks, as the false positives can get amplified due to the double-counting of information between the sensors of the network \cite{li2020arithmetic}. 
To address both of these issues simultaneously, the authors of \cite{li2018partial, li2020parallel, li2021is}, and \cite{wu2022partial} proposed the use of \textit{partial consensus}, in which the sensors do not communicate weak GM components that likely correspond to the false positives. This is achieved either by choosing a threshold at each sensor, such that GM components weaker than this threshold are not used in the inter-sensor communication, or by using only the highest-weighted GM components in the consensus step. 
An apparent drawback of this approach is that the partial consensus step never converges to the weighted arithmetic average, as the GM components smaller than the threshold are never communicated between sensors.

In this paper, we develop a distributed GM-PHD algorithm for multi-sensor multi-target tracking which addresses each of the above limitations of the existing algorithms. To limit the communication bandwidth requirement of the proposed distributed GM-PHD algorithm, we develop a random sampling method for selecting the GM components that are transmitted by each sensor during the average consensus step. The sampling probabilities are designed based on two considerations -- similar to the existing distributed GM-PHD algorithms, the sensors avoid communicating small GM components which are likely to correspond to the false positives, thereby reducing the likelihood of false positives. However, unlike the existing algorithms, the proposed algorithm ensures that the sensors' estimates asymptotically converge (in expectation) to the WAA, which is the optimal multi-sensor multi-target estimate in terms of the Cauchy-Schwarz divergence.
The convergence properties of the proposed algorithm are established mathematically rigorously by conducting an error analysis of consensus of functions in $L^p$ spaces, which has not been studied previously in the literature.
By numerically comparing the proposed distributed GM-PHD algorithm to existing algorithms, it is demonstrated that the proposed algorithm strikes a desirable trade-off between the optimality and the communication bandwidth requirement of multi-sensor multi-target tracking, making it especially well-suited for use in resource-constrained sensor networks.

The rest of the paper is organized as follows. In Section \ref{sec:prob_form}, the multi-sensor multi-target tracking problem is formulated, and the local GM-PHD filtering recursion is described. Section \ref{sec:fusion_step} develops the proposed inter-sensor fusion approach for distributed GM-PHD filtering, wherein we conduct an error analysis of the weighted average consensus step. The proposed random sampling rule is developed in Section \ref{sec:comm_constr}. In Section \ref{sec:simulation}, we compare the proposed algorithm to some of the existing multi-sensor GM-PHD filtering algorithms in the literature. Finally, Section \ref{sec:conclusion} presents the conclusion.

\section{Problem Formulation}
\label{sec:prob_form}
\subsection{Target Model}
In the multi-sensor multi-target tracking problem, an unknown time-varying number of dynamical systems, called the \textit{targets}, are observed using a sensor network.
The state of the $l^{th}$ target at timestep $k$ is given by a random vector $x_{k}^{(l)} \in \mathcal{X}$, where $\mathcal{X} = \mathbb{R}^{d_x}$ is the state space of each target and $d_x$ is its dimension. The collection of the target states at timestep $k$ is modeled as a random finite set (RFS)\footnote{A brief introduction to RFSs and their properties can be found in \cite{vo2006gmphd}.} $X_k = \left\{x_{k}^{(1)},...,x_{k}^{(|X_k|)}\right\}$, where $|{}\cdot{}|$ denotes the cardinality of a set.
Thus, the number of targets, $|X_k|$, is an integer-valued random variable.

The state of each target is assumed to evolve independently of the other targets, according to a linear Gauss-Markov process on the state space $\mathcal{X}$ with the transition kernel $f_{k|k-1}$. Thus, given that a target exists at timesteps $k-1$ and $k$, the probability density function (pdf) of its state at timestep $k$ is
\begin{equation}
    f_{k|k-1}(x | x_{k-1}) = 
    \mathcal{N}(x; F_{k-1} x_{k-1}, Q_{k-1})
    \label{eq:dynamic_model}
\end{equation}
where $F_{k-1}$ and $Q_{k-1}$ are known matrices and $x_{k-1} \in\mathcal X$ is the state of the target at timestep $k-1$. 
$\mathcal{N}({}\cdot{}; m, P)$ denotes the pdf of a multivariate Gaussian random vector with mean $m\in \mathbb R^{d_x}$ and covariance $P\in \mathbb R^{d_x\times d_x}$. 
The probability of a target continues to exist across both timesteps is called its \textit{survival probability}, denoted by $p^S_{k}:\mathcal X \rightarrow [0, 1]$, and is assumed to depend only on the current state of each target.

New targets of interests can arise due to spontaneous \textit{births} or \textit{spawns}. Births are independent of $X_k$ and assumed to be sampled from an underlying Poisson RFS having the intensity $\gamma _k:\mathcal X \rightarrow \mathbb R_{\geq0}$, where $\mathbb R_{\geq0}$ denotes the set of non-negative real numbers. 
Spawns correspond to targets that emerge from the vicinity of existing targets, so they are dependent on $X_k$. Given that a target exists at $x_{k-1} \in \mathcal X$ at timestep $k-1$, the new targets spawned from it are assumed to be sampled from
a Poisson RFS having the intensity $\sigma _{k|k-1}({}\cdot{}|x_{k-1}):\mathcal X \rightarrow \mathbb R_{\geq 0}$. Finally, note that the transition kernel $f_{k|k-1}$ and spawn intensity $\sigma_{k|k-1}$ are independent of the labels or identities of the targets; if they are dependent on the target labels, a labelled RFS can be used to represent the multi-target state instead \cite{vo2014labeled}.

\subsection{Sensor Network Model}
The targets are being observed by
a sensor network consisting of a number of spatially distributed sensors, which can be represented as a directed graph $\mathcal G=(\mathcal V, \mathcal E)$, where $\mathcal V$ denotes its vertices (representing the sensors) and $\mathcal E \subseteq \mathcal V \times \mathcal V$ is the set of edges (representing the directional communication links between sensors).
The graph $\mathcal G$ is assumed to be strongly connected. 
For each sensor $i \in \mathcal{V}$, ${N}^-_{i} = \left\{ j\in\mathcal{V}: (j,i)\in\mathcal{E} \right\}$ denotes its set of in-neighbors, i.e., the set of sensors from which sensor $i$ can receive data. Similarly, $N^+_i$ is the set of out-neighbors.

At timestep $k$, each sensor $i$ obtains a set of measurements $Z_{i,k} = {\left\{z^{(1)}_{i,k},...,z_{i,k}^{(|Z_{i,k}|)}\right\}}$, where $z_{i,k}^{(l)} \in \mathcal{Z}$. Here, $\mathcal{Z} \subseteq \mathbb{R}^{d_z}$ is the measurement space and $d_z$ is its dimension. It is assumed that the measurements are generated as per the following mechanism: the $i^{th}$ sensor obtains a measurement from a target at $x \in \mathcal X$ with the probability $p^D_{i,k}(x)$, where $p^D_{i,k}:\mathcal X \rightarrow [0,1]$ is called the detection probability of sensor $i$ at timestep $k$. The event of the $l^{th}$ target being detected at the $i^{th}$ sensor is independent of the detections at other target-sensor pairs.
Given that a target at $x \in \mathcal X$ is detected by sensor $i$, the pdf of the obtained measurement is denoted as $g_{i,k}({}\cdot{}|x)$ and is  
given by the linear Gaussian model
\begin{equation}
    g_{i,k}(z|x) = \mathcal N(z; H_{i,k} x, R_{i,k})
    \label{eq:measurement_model}
\end{equation}
where $H_{i,k}$ and $R_{i,k}$ are known matrices.
In addition, the obtained measurement can correspond to clutter, which are not generated by any of the targets, but arise due to sensor noise or other random effects. The clutter measurements are assumed to be realizations of a Poisson RFS with the intensity $\kappa_i:\mathcal Z \rightarrow \mathbb R_{\geq 0}$.
Thus, $\lbrace Z_{i,k} \rbrace$ are realizations of an RFS, and the distribution of the cardinality of this RFS (i.e., the number of measurements) is determined by the detection probability $p^D_{i,k}$ and clutter intensity $\kappa _i$.

\subsection{Local GM-PHD Filtering}
\label{sec:subsec_local_filt}
Distributed multi-target tracking over a sensor network is usually accomplished using a succession of a local filtering step (e.g., using a PHD filter) at each of the sensors followed by a fusion step, wherein the sensors share information across the communication channels. 
The predicted multi-target state of sensor $i$ is an RFS whose intensity is denoted as $v_{i,k|k-1}$. It characterizes the information known about the multi-target state $X_k$ at sensor $i$ at timestep $k$, before observing the measurements $Z_{i,k}$. After the incorporation of the measurements $Z_{i,k}$, the updated (posterior) intensity is denoted as $v_{i,k|k}$. 

In the proposed distributed GM-PHD filter, each sensor performs the local prediction and measurement update steps according to the GM-PHD filtering algorithm \cite{vo2006gmphd}. The GM-PHD filter uses Gaussian mixtures (GMs) to represent the intensity functions, admitting a closed-form solution to the PHD recursion. Consequently, the GM-PHD filter is optimal in a Bayesian sense, under reasonable assumptions on the target and sensor models; a more complete discussion of the relevant assumptions may be found in \cite{vo2006gmphd}. In practice, the GM-PHD filter is approximated by pruning small GM components to keep the space complexity of the algorithm from growing over time.

Given $x_{k-1} \in \mathcal X$, let the prior intensity $v_{i,k|k-1}$
be represented using a GM with $J_{i,k|k-1}$ components, as follows:
\begin{align}
    v_{i,k|k-1}(x) &= \sum_{l=1}^{J_{i,k|k-1}}w^{(l)}_{i,k|k-1}\mathcal N(x;m^{(l)}_{i,k|k-1}, P^{(l)}_{i,k|k-1})
    \label{eq:prior_intensity}
\end{align}
Sensor $i$'s prior estimate of the number of targets is $\sum_{l=1}^{J_{i,k|k-1}} w_{i,k|k-1}^{(l)}$, as this sum corresponds to the integral of the intensity $v_{i,k|k-1}(x)$.

The birth and spawn intensities ($\gamma_k$ and $\sigma_{k|k-1}({}\cdot{}|x_{k-1})$, respectively) are also represented by GMs whose parameters are assumed to be known. The prior intensity $v_{i,k|k-1}$ includes the surviving targets from timestep $k-1$ as well as new targets that were born or spawned at timestep $k$. Since these correspond to independent Poisson RFSs, we can add the corresponding intensity functions as follows:
\begin{equation}
    v_{i,k|k-1}(x) = v_{i,k|k-1}^S(x) + v_{i,k|k-1}^\sigma(x) + \gamma_{k}(x),
\end{equation}
where $v^S_{i,k|k-1}$ corresponds to the targets that survived from timesteps $k-1$ to $k$ and $v_{i,k|k-1}^{\sigma}$ corresponds to newly spawned targets. Similarly, the posterior intensity is computed as a sum of two intensities:
\begin{equation}
    v_{i,k|k}(x) = (1 - p_{D,k}(x)) v_{i,k|k-1}(x) + \sum_{z \in Z_{k}} v_{i,k}^D(x;z)
\end{equation}
where the first term corresponds to the targets which were not detected and $v^{D}_{i,k}(x;z)$ is the new information obtained through the measurement $z\in Z_{k}$. Further details about the computation of prior and posterior intensities ($v_{i,k|k-1}$ and $v_{i,k|k}$, respectively) can be found in \cite{vo2006gmphd}. 

We refer to the computation of $v_{i,k|k-1}$ and $v_{i,k|k}$ using the local measurements $Z_{i,k}$ at sensor $i$ as local GM-PHD filtering.
In the remainder of this paper, we focus on developing a distributed GM dissemination and fusion protocol for improving the multi-target tracking performance of each sensor. Our proposed distributed GM-PHD filtering algorithm constitutes the local GM-PHD filtering step followed by one or more inter-sensor fusion steps. 

\section{Fusion of Local GM-PHD Filters}
\label{sec:fusion_step}

In addition to local GM-PHD filtering, the communication channels, i.e., the edges in $\mathcal E$, can serve as additional sources of information for each sensor.
In theory, the optimal solution to the given multi-sensor multi-target problem can be realized by aggregating all the measurements of the sensor network as $\bigcup_{i\in \mathcal V}Z_{i,k}$ at each sensor, at each timestep. Since the communication and computational cost of such an approach does not scale well with the number of sensors in the network (which is equal to $|\mathcal V|$), it is infeasible when one or more of the following challenges are taken into consideration: 
\begin{enumerate}
    \item the total number of sensors in the network is large,
    \item the inter-sensor communication channels have limited bandwidths, or 
    \item the communication is not instantaneous, i.e., there are significant delays between the transmission and reception of inter-sensor communications.
\end{enumerate}
A distributed solution to the multi-sensor multi-target tracking problem can solve each of the preceding challenges. Distributed PHD filtering is achieved by fusing the intensity functions $v_{i,k|k}$ of the individual sensors, as these are typically much smaller in dimension than the measurement sets $Z_{i,k}$. For instance, the measurements in $Z_{i,k}$ may be high-dimensional camera images, whereas $v_{i,k|k}$ has a concise representation in terms of its GM components.

\subsection{Weighted Arithmetic Average (WAA) Fusion}
The weighted arithmetic average (WAA) of the local posterior intensity functions of the sensors is defined as
\begin{equation}
    v^{*}_{k|k}(x) = \sum_{i \in \mathcal V} \omega_i v_{i,k|k}(x)
    \label{eq:WAA_def}
\end{equation}
where $\omega_i \geq 0$ and $\sum_{i\in \mathcal V} \omega_i = 1$. In \cite{da2020kullback}, it was shown that the WAA (\ref{eq:WAA_def}) minimizes the weighted Kullback-Leibler (KL) divergence between itself and the local posterior intensities, $v_{i,k|k}$, when the fused intensity is taken as the second argument of the divergence. Similarly, it was shown in \cite{gostar2017cauchy} that the WAA also minimizes the weighted Cauchy-Schwarz (CS) divergence (which is symmetric in its arguments). Thus, we have
\begin{align}
    v^{*}_{k|k}(x) &= \arg \min_{g} \left( \sum_{i\in \mathcal V} \omega_i D_{KL}\left(v_{i,k|k}(x)\mathrel{\Vert}g(x)\right)\right) \\&= \arg \min_{g}\left( \sum_{i\in \mathcal V} \omega_i D_{CS}\left(v_{i,k|k}(x)\mathrel{\Vert}g(x)\right)\right)
    \label{eq:csd}
\end{align}
where $D_{KL}$ and $D_{CS}$ denote the KL and CS divergences, respectively.
Either choice of divergence can be considered as a cost function for the local PHD fusion process, analogous to how the weighted Euclidean distance is used as the cost function in single-target tracking. By minimizing the divergence, the information gain between the local posterior intensities and the fused intensities is minimized; so the WAA (\ref{eq:WAA_def}) is optimal in terms of the \textit{principle of minimum discrimination of information (PMDI)} \cite{gostar2017cauchy, gao2020multiobject}.

The weights $\omega_i$ can be chosen as $\omega_i =\sfrac{1}{|\mathcal V|}$ if the sensor network is homogeneous, i.e., all the sensors have the same sensing capability, measurement noise covariance and detection probability. When this is not the case, a larger weight can be assigned to sensors having better sensing capability. The connectivity of the sensor network $\mathcal G$ can also inform the choice of $\omega _i$. 

Additionally, note that by taking an integral on both sides of (\ref{eq:WAA_def}), we have

\begin{equation}
    \int_{\mathcal X} v^{*}_{k|k}(x)\hspace{1pt}dx = \sum_{i \in \mathcal V} \omega_i \left(\int_{\mathcal X} v_{i,k|k}(x)\hspace{1pt}dx\right)
    \label{eq:integral_WAA}
\end{equation}
From the properties of Poisson RFS intensities, it follows that the integral of an intensity is the expected value of the cardinality of the corresponding RFS. Thus, WAA fusion of the intensities $v_{i,k|k}$ entails WAA fusion of the cardinality estimates of the sensors. The converse is not true, as two different functions may integrate to the same value. Thus, the proposed approach is different from the one used in \cite{li2018cardinality}, which directly computes the WAA of the cardinality estimates.

\subsection{Distributed WAA Fusion using Consensus}

To realize WAA fusion at each sensor in a distributed manner, an average (weighted) consensus algorithm can be used. In each iteration of the average consensus algorithm, the sensors communicate the GM components of their posterior intensity to their out-neighbors.
Thereafter, each sensor fuses its posterior intensities with those of its in-neighbors using a weighted combination, where the weights $\Omega_{ij}\in \mathbb R$ correspond to the entries of a matrix $\Omega = [\Omega_{ij}]$, with $\Omega_{ij}\neq 0$ if and only if $(j,i)\in \mathcal E$.
The average consensus algorithm is described in Algorithm \ref{alg:con}, which uses a total of $\alpha \geq 1$ inter-sensor fusion iterations.

\begin{algorithm}
\caption{Average Consensus of Posterior Intensities}
\begin{algorithmic}[1]
\vspace{2pt}
\REQUIRE The consensus weights $\Omega_{ij}$, posterior intensities $v_{i,k|k}$, and number of inter-sensor fusion steps $\alpha \geq 1$.\\
\vspace{3pt} 
\STATE Set $v_{i,k|k}^{(0)} \leftarrow v_{i,k|k} \forall i\in \mathcal V$
\FOR{$l=1,2,\dots,\alpha$}
\STATE Each sensor communicates its fused posterior intensity $v^{(l-1)}_{j,k|k}(x)$ to its out-neighbors, $N^+_i$. \vspace{2pt}
\STATE Update the posterior intensities as \begin{equation}
    v^{(l)}_{i,k|k}(x) \leftarrow \sum_{j \in N^{-}_i \cup \lbrace i\rbrace} \Omega_{ij} \hspace{2pt} v^{(l-1)}_{j,k|k}(x)
    \label{eq:average_consensus}
\end{equation}
$\forall i\in\mathcal V$.
\ENDFOR
\RETURN The fused posterior intensity, $v^{(\alpha)}_{i,k|k}(x)$.
\end{algorithmic}
\label{alg:con}
\end{algorithm}

\subsubsection{Error Analysis of Algorithm \ref{alg:con}}
\label{sec:subsec_error_analysis}
At the $l^{th}$ iteration of the average consensus algorithm (Algorithm \ref{alg:con}),
the error between the $i^{th}$ sensor's posterior intensity and the WAA is $v^{(l)}_{i,k|k}-v^*_{k|k}$. To establish the convergence properties of the algorithm, we need to show that the average consensus step (\ref{eq:average_consensus}) drives this error to $0$ (i.e., $v^{(l)}_{i,k|k}-v^*_{k|k}$ approaches the function which is $0$ everywhere in its domain) at each of the sensors.
To facilitate this analysis, let us assume that the intensities $v_{i,k|k}$ are elements in the $L^p(\mathcal X)$ function space, i.e., for some $1\leq p\leq \infty$,
\begin{equation}
\|v_{i,k|k}\|_{L^p} =  \left(\int_\mathcal X |v_{i,k|k}(x)|^p dx\right)^{\frac{1}{p}}< \infty
\end{equation}
$\forall i\in \mathcal V$. Recall that the integral of $v_{i,k|k}$ is sensor $i$'s posterior estimate of the cardinality of the RFS $X_k$. If $p=1$, then $v_{i,k|k}\in L^1(\mathcal X)$ if and only if sensor $i$'s cardinality estimate is bounded. Hence, the assumption that $v_{i,k|k}\in L^p(\mathcal X)$  is reasonable.

We define the vector space $V = \left(L^p(\mathcal X)\right)^{|\mathcal V|}$, so that we can collectively represent the set of intensities $\lbrace v_{i,k|k}\rbrace_{i\in\mathcal V}$ as a vector in $V$:
\[\bar v_{k|k} = \begin{bmatrix} v_{1, k|k} & v_{2, k|k} & \dots & v_{|\mathcal V|, k|k}\end{bmatrix}^\intercal \in V\]
For vectors in $V$, the addition and scalar multiplication operations are defined as the pointwise addition and pointwise scalar multiplication of their components.
We endow $V$ with a norm $\|\mathrel\cdot\|_V$, defined as
\begin{equation}
\| \bar v_{k|k}\|_V = \left(\sum_{i=1}^{|\mathcal V|} \|v_{i,k|k}\|_{L^p} ^p\right)^{\frac{1}{p}}
\end{equation}
Thus, $(V, \|\mathrel\cdot\|_V)$ is a normed vector space. For a matrix $A$, let $\|A\|_{\textsc{op}}$ refer to its operator norm\footnote{If we set $p=2$ in the foregoing definitions, then $V$ can be made into a Hilbert space, in which case $\|A\|_{\textsc{op}}$ is equal to the largest singular value of $A$ \cite[Thm. 4, p. 40]{halmos1957introduction}.}, given by
\begin{equation}
\|A\|_{\textsc{op}} = \sup \left\lbrace \frac{\|A v\|_V}{\|v\|_V} \mathrel{:} v\neq 0, v\in V \right\rbrace
\end{equation}
With the above definitions in place, we can state the conditions for the convergence of the average consensus step (\ref{eq:average_consensus}). 

\begin{theorem}
Suppose the following conditions are satisfied:
\begin{enumerate}
\item $\Omega \mathbbm 1 = \mathbbm 1$, where $\mathbbm 1 = \begin{bmatrix}
1 & 1 & \dots & 1\end{bmatrix}^\intercal$,
%
\item $\bar \omega^\intercal \Omega = \bar \omega^\intercal$, where $\bar \omega  = [\omega _1, \omega _2, \dots, \omega _{|\mathcal V|}]^\intercal$, and
\item $\|{\Omega - \mathbbm 1 \bar{\omega}^\intercal} \|_{\textsc{op}} < 1$,
\end{enumerate}
then 
the fused posterior intensities $\lbrace v_{i,k|k}^{(\alpha)} \rbrace_{i\in \mathcal V}$ computed using Algorithm \ref{alg:con}
asymptotically converge to the WAA in the $L^p$ norm:
\begin{align}
   \lim_{\alpha\rightarrow \infty} \| v^{(\alpha)}_{i,k|k} - v^{*}_{k|k}\|_{L^p} \rightarrow 0, \quad \forall i\in\mathcal V
    \label{eq:convergence}
\end{align}
As a consequence, there is a subsequence $\lbrace l_1, l_2, \dots \rbrace$ such that $\lim_{s\rightarrow \infty} v^{(l_s)}_{i,k|k} = v^{*}_{k|k}$ almost everywhere \cite[p. 75]{bartle2014elements}.
\end{theorem}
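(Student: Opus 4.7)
The plan is to lift the per-sensor iteration (\ref{eq:average_consensus}) to the product space $V$, so that one step of Algorithm~\ref{alg:con} becomes $\bar v^{(l)}_{k|k} = \Omega\,\bar v^{(l-1)}_{k|k}$, where $\Omega$ is regarded as a bounded linear operator on $V$ acting entry-wise on the function-valued vector. Unrolling the recursion gives $\bar v^{(\alpha)}_{k|k} = \Omega^\alpha \bar v^{(0)}_{k|k}$, while the common WAA target, replicated across all sensors, can be expressed as $\mathbbm{1}\, v^{*}_{k|k} = P\,\bar v^{(0)}_{k|k}$, where $P := \mathbbm{1}\bar{\omega}^\intercal$ is a rank-one operator on $V$. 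The error vector at iteration $\alpha$ is then $(\Omega^\alpha - P)\,\bar v^{(0)}_{k|k}$, and the task reduces to bounding the operator norm of $\Omega^\alpha - P$.

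The central algebraic observation is that $P$ acts like a projection that is preserved by $\Omega$ on both sides. From condition~1, $\Omega P = (\Omega\mathbbm{1})\bar{\omega}^\intercal = P$; from condition~2, $P\Omega = \mathbbm{1}(\bar{\omega}^\intercal \Omega) = P$; and since $\bar{\omega}^\intercal \mathbbm{1} = 1$ by the normalization of the weights in (\ref{eq:WAA_def}), $P^2 = \mathbbm{1}(\bar{\omega}^\intercal \mathbbm{1})\bar{\omega}^\intercal = P$. A short induction on $\alpha$, at each step using these three identities to collapse the cross terms produced by expanding $(\Omega-P)(\Omega^{\alpha-1}-P)$, yields the key identity $(\Omega - P)^\alpha = \Omega^\alpha - P$ for every $\alpha \geq 1$. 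Hence the error simplifies to $(\Omega - P)^\alpha \bar v^{(0)}_{k|k}$.

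Convergence in $V$ then follows by submultiplicativity of the operator norm together with condition~3:
\[
\|\bar v^{(\alpha)}_{k|k} - \mathbbm{1}\, v^{*}_{k|k}\|_V \;\leq\; \|\Omega - P\|_{\textsc{op}}^{\,\alpha}\, \|\bar v^{(0)}_{k|k}\|_V \;\xrightarrow[\alpha\to\infty]{}\; 0.
\]
Because $\|v^{(\alpha)}_{i,k|k} - v^{*}_{k|k}\|_{L^p} \leq \|\bar v^{(\alpha)}_{k|k} - \mathbbm{1}\, v^{*}_{k|k}\|_V$ by the definition of $\|\mathrel\cdot\|_V$, the componentwise $L^p$ convergence (\ref{eq:convergence}) is immediate. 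The subsequence statement is then the standard fact, cited in the theorem, that any $L^p$-convergent sequence of functions admits an almost-everywhere convergent subsequence.

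The step I anticipate needing the most care is the identity $(\Omega - P)^\alpha = \Omega^\alpha - P$: a priori the binomial-style expansion contains $2^\alpha$ terms, and exploiting the three relations $\Omega P = P\Omega = P^2 = P$ simultaneously to reduce them is the real heart of the argument. The remainder, namely the lift to the product space $V$, the operator-norm bound, and the extraction of an a.e.-convergent subsequence, is routine once the projection structure of $P$ has been identified.
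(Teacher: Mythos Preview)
Your proposal is correct and follows essentially the same approach as the paper: lift to the product space $V$, exploit the identities $\Omega P = P\Omega = P^2 = P$ with $P=\mathbbm 1\bar\omega^\intercal$, and bound the error by $\|\Omega-P\|_{\textsc{op}}^\alpha$. The only cosmetic difference is that the paper works directly with the error recursion $\bar v^{(l+1)}_{k|k}-\mathbbm 1 v^*_{k|k}=(\Omega-P)(\bar v^{(l)}_{k|k}-\mathbbm 1 v^*_{k|k})$ rather than first establishing the identity $(\Omega-P)^\alpha=\Omega^\alpha-P$; this sidesteps the induction you flagged as delicate and yields the slightly sharper constant $\|\bar v^{(0)}_{k|k}-\mathbbm 1 v^*_{k|k}\|_V$ on the right-hand side.
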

\begin{proof}
To show (\ref{eq:convergence}), we can rewrite the average consensus step (\ref{eq:average_consensus}) as $ \bar v^{(l)}_{k|k} = \Omega \bar v^{(l-1)}_{k|k}$. 
Observe that,  given $ l \geq 1$
\begin{equation}
\bar \omega ^\intercal \bar v^{(l)}_{k|k} = \bar \omega ^\intercal \Omega \bar v^{(l-1)}_{k|k} = \bar \omega ^\intercal \bar v^{(l-1)}_{k|k}
\end{equation}
By induction, we have that $\bar \omega ^\intercal \bar v^{(l)}_{k|k} = \bar \omega ^\intercal \bar v^{(0)}_{k|k} = v^*_{k|k}$; in other words, the weighted average is invariant under the average consensus step (\ref{eq:average_consensus}).
Thus, the error dynamics corresponding to the average consensus step is
\begin{align}
\bar v^{(l+1)}_{k|k} - \mathbbm 1 \bar v^*_{k|k} &= \Omega \bar v^{(l)}_{k|k} - \mathbbm 1  \bar \omega ^\intercal \bar v^{(l)}_{k|k}\\
&=\left(\Omega  - \mathbbm 1  \bar \omega ^\intercal \right) \bar v^{(l)}_{k|k} \\
&=\left(\Omega  - \mathbbm 1  \bar \omega ^\intercal \right) (\bar v^{(l)}_{k|k}-\mathbbm 1 \bar v^*_{k|k})
\end{align}
where, in the last step, we used the facts that $\Omega \mathbbm 1 = \mathbbm 1$ and $\bar \omega ^\intercal \mathbbm 1 = 1$. Consequently, we have
\begin{equation}
\|\bar v^{(l)}_{k|k} - \mathbbm 1 \bar v^*_{k|k}\|_{V} \leq \|{\Omega - \mathbbm 1 \bar{\omega}^\intercal} \|_{\textsc{op}}^l \|\bar v^{(0)}_{k|k} - \mathbbm 1 \bar v^*_{k|k}\|_{V}
\end{equation}
which implies (\ref{eq:convergence}). 
\end{proof}

Lastly, we remark that a finite number of average consensus steps can be used in practice (i.e., $\alpha < \infty)$, with $\|{\Omega - \mathbbm 1 \bar{\omega}^\intercal} \|_{\textsc{op}}$ dictating the rate of convergence. Using (\ref{eq:csd}), it can be seen that each weighted average consensus iteration lowers the Cauchy-Schwarz divergence between the sensors' posterior intensities (which is a measure of their information difference \cite{gostar2017cauchy}).

\subsubsection{Comparison with Existing Results}
Our analysis in this section extends the existing results on consensus (which has largely only considered consensus of vectors in Euclidean spaces)
to the case where consensus is sought on functions in $L^p$. For consensus in Euclidean spaces, a set of sufficient conditions for convergence is the following: 
\begin{itemize}
\item $\Omega_{ij} \geq 0$ whenever $(j,i)\in \mathcal E$,
\item $\Omega_{ii}>0$ ${\forall i\in\mathcal V}$, and 
\item the graph $\mathcal G$ is strongly connected
\end{itemize}
in which case, the fact that $\|{\Omega - \mathbbm 1 \bar{\omega}^\intercal} \|_{\textsc{op}}<1$ follows from the Perron-Frobenius theorem \cite[Lemma 1]{corless2012consensus}. Thus, the foregoing sufficient conditions for convergence are stronger than the ones we obtained in Section \ref{sec:subsec_error_analysis}.
Using the protocol in \cite{corless2012consensus}, the weights $\Omega_{ij}$ may be chosen in a distributed manner, i.e., each sensor is able to choose the weights in real-time without needing global knowledge of the topology of $\mathcal G$.

\subsection{Limitations of the Average Consensus Approach}
Although the average consensus algorithm (\ref{eq:average_consensus}) guarantees asymptotic convergence to the WAA, it has several limitations. Firstly, it requires each sensor to transmit $J_{i,k|k}$ GM components to its out-neighbors. This limits the number of targets that can be tracked simultaneously by the sensor network when the communication channels between the sensors have limited bandwidth, as is usually the case in wireless sensor networks. Secondly, the large number of GM components accumulated at each sensor increase the computational complexity of the subsequent local PHD filtering steps. Lastly, as noted in \cite{li2018partial}, the consensus step can exacerbate the problem of false positives in WAA fusion; small GM components corresponding to the false positive detections (i.e., clutter measurements which are misidentified as targets) can propagate through the cycles (closed loops) of the graph $\mathcal G$, leading to an overall feedback effect. In this way, false positives can get amplified during the fusion step, reducing the estimation performance of the sensor network. We address each of these concerns in the next section, by proposing a distributed protocol for dissemination and fusion of GM components that ensures asymptotic consensus and suppression of false positives, while having the same (or lower) communication bandwidth requirement than the existing distributed PHD filtering algorithms.

\section{Multi-Sensor Fusion under Communication Constraints}
\label{sec:comm_constr}

In order to carry out the fusion step of (\ref{eq:average_consensus}), the sensors must transmit a certain number of GM components to their out-neighbors. Since the dimension of the state-space $\mathcal X$ is fixed as $d_x$, each GM component is specified by $d_x + \tfrac{1}{2}d_x(d_x + 1) + 1$ floating point numbers, where the individual terms correspond to the mean vector, the covariance (which is a symmetric matrix), and the scalar weight of the GM component, respectively. In this section, we consider the case where the communication bandwidth is limited, such that the maximum number of GM components that may be transmitted across the inter-sensor communication channels at a given timestep is denoted as $B$, with $B \geq 1$.
In order to limit the communication complexity of the distributed GM-PHD algorithm, the authors of \cite{li2018partial} and \cite{li2020parallel} proposed the following heuristics for selecting the $J_{i,k|k}$ GM components that are to be transmitted at each sensor:
\begin{itemize}
    \item \textit{Rank Rule}: Rank the GM components based on their weights (highest to lowest) and select the top $B$ components
    \item \textit{Threshold Rule}: Fix a threshold, and transmit any GM components that have weights greater than this threshold
\end{itemize}
In either case, the GM components with small weights are not transmitted. This has the advantage of suppressing false positives, which typically correspond to the components having small weights. However, convergence to the WAA (\ref{eq:WAA_def}) cannot be guaranteed if the rank or threshold rule is used, as the GM components with small weights are never communicated between sensors.
Consequently, this approach is referred to as \textit{partial consensus} by its authors.

In order to guarantee asymptotic convergence of the average consensus algorithm to the WAA, we propose a new rule for selecting the GM components, which we call the \textit{sampling rule}. For simplicity, we assume that the communication is based on wireless broadcast, i.e., each sensor transmits the same message to all of its out-neighbors. Let $\mathcal B_{i}$ be a sequence of indices (numbers between $1$ and $J_{i,k|k}$) corresponding to the GM components transmitted by sensor $i$ to its out-neighbors at timestep $k$, where we omit the timestep $k$ for brevity. The total number of distinct indices in $\mathcal B_i$ is at most $B$, as per the communication constraint. The indices in $\mathcal B_i$ are chosen through random sampling, where the sampling is carried out either with or without replacement.

\subsection{Sampling with Replacement}
In order to construct $\mathcal B_i$, sensor $i$ generates a sequence of random samples of indices with replacement. In each sample,
the index $l$ is chosen with probability $p_{i}^{(l)}$, with 
\begin{equation}
    \sum_{l=1}^{J_{i,k|k}} p_{i}^{(l)} = 1
\end{equation}
Thus, the elements of $\mathcal B_i$ are $i.i.d.$ samples of a categorical random variable and can be thought of as the outcomes of rolling a $J_{i,k|k}$-sided biased die $|\mathcal B_i|$ times. Due to the limited communication bandwidth, the sampling process is terminated when the number of distinct indices in $\mathcal B_i$ equals $B$.
Let $\zeta^{(l)}_{i}$ denote the number of times that the index $l$ appears in $\mathcal B_i$. Observe that $(\zeta^{(1)}_{i}, \zeta^{(2)}_{i}, \dots, \zeta^{(|J_{i,k|k}|)}_i)$ follow the multinomial distribution, with $\mathbb E[\zeta^{(l)}_{i}]=p_{i}^{(l)} |\mathcal B_i|$.

Once $\mathcal B_i$ is generated, the sensor transmits the corresponding GM components to its out-neighbors. If there are any repeated components, the sensor transmits each of these components (in addition to the number of times they were sampled) only once, in order to avoid redundancy in the communication. Thus, each sensor transmits a random function (i.e., a random GM) to its out-neighbors at each timestep, given by
\begin{align}  &\sum_{l=1}^{J_{i,k|k}}\zeta^{(l)}_{i} \hspace{2pt} \tilde w^{(l)}_{i,k|k}\mathcal N(m^{(l)}_{i,k|k}, P^{(l)}_{i,k|k})
    \label{eq:random_GM}
\end{align}
where $\mathcal N(m,P)$ denotes a GM component having the mean vector $m\in \mathbb R^{d_x}$ and covariance $P\in \mathbb R^{d_x \times d_x}$.
Note that the weights of the transmitted GM components $\tilde w^{(l)}_{i,k|k}$ are different from the weights used in the local GM-PHD filtering, $w^{(l)}_{i,k|k}$.
To ensure that the fixed point of the average consensus algorithm is preserved (in expectation) by the sampling rule, we require that the expected value of the random GM is equal to the posterior intensity of sensor $i$, i.e.,
\begin{align}
\mathbb E\left[\sum_{l=1}^{J_{i,k|k}}\zeta^{(l)}_{i} \hspace{2pt} \tilde w^{(l)}_{i,k|k}\mathcal N(m^{(l)}_{i,k|k}, P^{(l)}_{i,k|k})\right] = v_{i,k|k}
\label{eq:exp_randomGM}
\end{align}
Equivalently,
we have
\begin{align}
\mathbb E[\zeta^{(l)}_{i}]\hspace{2pt}\tilde w^{(l)}_{i,k|k} = w^{(l)}_{i,k|k} \label{eq:exp_weight_condition}\\ \hspace{1pt}  \tilde w^{(l)}_{i,k|k} = \frac{w^{(l)}_{i,k|k}}{p_{i}^{(l)}|\mathcal B_i|}
\label{eq:weight_condition}
\end{align}
Equation (\ref{eq:weight_condition}) imposes a constraint on the sampling probabilities $p_{i}^{(l)}$ and the weights $\tilde w_{i,k|k}^{(l)}$, but does not uniquely specify them. 
We propose the following choice of sampling probabilities:
\begin{equation}
p_{i}^{(l)}=\frac{w^{(l)}_{i,k|k}}{\sum_{l=1}^{J_{i,k|k}} w^{(l)}_{i,k|k}}
\label{eq:pl_choice}
\end{equation}
for which the corresponding weights are given by (\ref{eq:weight_condition}), as
\begin{equation}
\tilde w^{(l)}_{i,k|k} = \frac{\sum_{l=1}^{J_{i,k|k}} w^{(l)}_{i,k|k}}{|\mathcal B_i|} \triangleq \tilde w_{i,k|k}
\label{eq:final_w_tilde_def}
\end{equation}
The proposed choice of sampling probabilities, given by (\ref{eq:pl_choice}) and (\ref{eq:final_w_tilde_def}), has the clear advantage that the weights $\tilde w^{(l)}_{i,k|k}$ no longer need to be transmitted for each component, a single number $\tilde w_{i,k|k}$ may be transmitted instead, further reducing the communication requirement of the algorithm. Additionally, observe that if the sampling probabilities $p_{i}^{(l)}$ are chosen as per (\ref{eq:pl_choice}), then GM components with higher weights are transmitted more often than those with smaller weights,
so that the components with small weights only survive the fusion step if the targets corresponding to them are persistently detected. In this way, false positive detections that do not survive successive local PHD filtering steps are suppressed by the inter-sensor fusion step. The ability of the algorithm to suppress false alarms as well as the reduced communication cost motivate the choice of $p_{i}^{(l)}$ as (\ref{eq:pl_choice}).

\begin{algorithm}
\caption{Random Sampling Rule (with Replacement)}
\begin{algorithmic}[1]
\vspace{2pt}
\REQUIRE The maximum allowable number of GM components, $B\geq 1$ \\
\vspace{3pt} 
\WHILE{$\mathcal B_i$ has at most $B$ distinct indices}
\STATE Randomly sample an index from the sequence $\left(1, 2, \dots, J_{i,k|k}\right)$ with the corresponding sampling probabilities as given by (\ref{eq:pl_choice}).
\ENDWHILE
\STATE Each sensor broadcasts the following GM to its out-neighbors:
\[\tilde w_{i,k|k} \sum_{l=1}^{J_{i,k|k}}\zeta^{(l)}_{i} \hspace{2pt} \mathcal N(m^{(l)}_{i,k|k}, P^{(l)}_{i,k|k})
\]
where $\zeta^{(l)}_i$ is the number of times $l$ occurs in $\mathcal B_i$.
\end{algorithmic}
\label{alg:sampling}
\end{algorithm}
Thus, random sampling (as described in Algorithm \ref{alg:sampling}) can be used to replace step $3$ of Algorithm \ref{alg:con}, thereby limiting the communication cost of the distributed GM-PHD filter. From step $4$ of Algorithm \ref{alg:sampling}, we see that at each sensor, the sampling rule requires the  communication of $B$ GM components, $1$ floating point number (which is $\tilde w_{i,k|k}$), and less than $B$ integers (i.e., the numbers $\zeta_i^{(l)}$) at each iteration of the average consensus step.

Observe that, even though each sensor communicates a random GM to its neighbors, the integral of the GM is a deterministic quantity:
\begin{align}   
\tilde w_{i,k|k} \sum_{l=1}^{J_{i,k|k}}\zeta^{(l)}_{i} = \tilde w_{i,k|k} |\mathcal B_i| = \sum_{l=1}^{J_{i,k|k}} w^{(l)}_{i,k|k}
\end{align}
where in the second equality, we used (\ref{eq:final_w_tilde_def}).
Thus, while the posterior intensities of the sensors asymptotically converge to the WAA in expectation, the integrals of the intensities (i.e., the cardinality estimates of the sensors) converge deterministically.

\subsection{Sampling without Replacement}
An alternative strategy for constructing $\mathcal B_i$ is to sample the indices without replacement. As each index occurs in $\mathcal B_i$ at most once, $\mathcal B_i$ can be interpreted as a set. The random variable $\zeta^{(l)}_{i}$ defined in the previous section becomes the indicator variable $\mathbbm 1_{\lbrace l\in \mathcal B_i \rbrace}$, where
\begin{align}
    \mathbbm 1_{\lbrace l\in \mathcal B_i \rbrace} = \begin{cases}
    \begin{array}{cl}
         1& \text{if label}\ l\ \text{is in }\mathcal B_i,  \\
         0& \text{otherwise.}
    \end{array}
    \end{cases}
\end{align}
so that the condition (\ref{eq:exp_weight_condition}) now becomes
\begin{equation}
    \mathbb E[\mathbbm 1_{\lbrace l\in \mathcal B_i \rbrace}]\hspace{2pt}\tilde w^{(l)}_{i,k|k} = P (l\in \mathcal B_i )\hspace{2pt}\tilde w^{(l)}_{i,k|k} = w^{(l)}_{i,k|k} 
\end{equation}
where $P({}\cdot{})$ denotes the probability of an event. Thus, once the 
probabilities $P (l\in \mathcal B_i )$ are computed,
the weights $\tilde w^{(l)}_{i,k|k}$ must be chosen accordingly, as
\begin{equation}
    \tilde w^{(l)}_{i,k|k} = \frac{w^{(l)}_{i,k|k}}{P( l\in \mathcal B_i )}
    \label{eq:no_replacement_weights}
\end{equation}
An efficient method for sampling without replacement can be found in \cite{Efraimidis2015}, which is able to assign a higher probability to components with higher weights, thereby incorporating the ability to suppress false positives into the fusion step.

Note that it is no longer straightforward to specify the inclusion probabilities of the indices $P( l\in \mathcal B_i )$ \textit{a priori}, like we did in (\ref{eq:pl_choice}). To see this, consider the case where $B=J_{i,k|k}$, which fixes the probabilities as $P( l\in \mathcal B_i )=1$ for all the indices. Thus, if we were to specify the probabilities $P( l\in \mathcal B_i )$ \textit{a priori}, it can make the sampling problem infeasible; the same is true even when $B<J_{i,k|k}$ \cite[Example 2]{Efraimidis2015}. Moreover, when using the sampling without replacement rule, in addition to transmitting the mean vectors $m^{(l)}_{i,k|k}$ and covariances $P^{(l)}_{i,k|k}$ of the selected components, the sensors must transmit the modified weights $\tilde w_{i,k|k}$ as well. In contrast, the proposed sampling with replacement rule (given in Algorithm \ref{alg:sampling}) allows one to specify the sampling probabilities \textit{a priori}, and also has a lower communication bandwidth requirement.

In summary, the proposed distributed GM-PHD algorithm uses a local GM-PHD filter at each sensor to update its multi-target estimate, followed by a given number of average consensus steps (as per Algorithm \ref{alg:con}) to fuse the multi-target estimates of the sensors. Additionally, the proposed random sampling rule (given in Algorithm \ref{alg:sampling}) is used to limit the communication bandwidth of the approach, while ensuring its asymptotic optimality.

\section{Simulations}
\label{sec:simulation}
In this section, the performance of the proposed distributed GM-PHD filtering method is evaluated against three different algorithms from the literature:
\begin{itemize}
\item GM-PHD filter without consensus \cite{vo2006gmphd}, 
\item GM-PHD filter with full consensus (in which all the GM components are communicated between neighbors) \cite{battistelli2020differentFOVs}, and 
\item GM-PHD filter with partial consensus (which uses the rank rule to limit the number of GM components that are communicated) \cite{li2018partial}
\end{itemize}

No inter-sensor communication is carried out in the GM-PHD filter without consensus. In the other three algorithms, each sensor implements a total of $\alpha \geq 1$ average consensus steps between successive local PHD filtering steps (Algorithm \ref{alg:con}). The partial consensus algorithm uses the rank rule (described in Section \ref{sec:comm_constr}) to limit the number of GM components which are broadcast by each sensor to be $B$, where we set $B=5$.
Similarly, the proposed algorithm uses the sampling with replacement rule (Algorithm \ref{alg:sampling}) to limit the number of distinct GM components that are transmitted by a given sensor to $B$. Consequently, the communication bandwidth requirement of the proposed algorithm is at most that of the GM-PHD filter with partial consensus. The communication costs of the different algorithms can be ordered as follows:
\begin{equation*}
\text{Proposed Method} \leq \text{Partial Consensus} \ll \text{Full Consensus}
\end{equation*}
\noindent
While the full consensus algorithm may be unsuitable for practical applications due to its high communication cost, it serves as a benchmark for evaluating the multi-target tracking capability of the proposed distributed GM-PHD algorithm.

\subsection{Simulation Scenario}
We use a multi-target tracking simulation scenario to evaluate the distributed GM-PHD algorithms. The simulation scenario described herein is similar to the one considered in \cite{vo2006gmphd}, with the key difference being that we consider a greater (time-varying) number of targets. Moreover, Ref. \cite{vo2006gmphd} considers the case of single-sensor multi-target tracking, whereas we consider the multi-sensor multi-target tracking case.

Consider a multi-target set having a 4-dimensional state space, such that the state of each target is composed of its $2$ dimensional position and velocity vectors. Thus, the $l^{th}$ target's state is \[x_k^{(l)}=\begin{bmatrix}
\textrm{pos}_{k}^\intercal & \textrm{vel}_{k}^\intercal\end{bmatrix}^{\intercal}\] where $\textrm{pos}_{k} \in \mathbb R^2$ and $\textrm{vel}_{k}\in \mathbb R^2$ denote the target's position and velocity vectors at timestep $k$, respectively.
%
The targets follow the linear Gauss-Markov dynamical model (\ref{eq:dynamic_model}), with
\begin{equation}
    F_{k} = \begin{bmatrix}
        I & h I \\
        0 & I
    \end{bmatrix}
\ \ \text{and}\ \ 
    Q_{k} = 9 \begin{bmatrix}
        \frac{h^4}{4} I & \frac{h^3}{2} I \vspace{2pt}\\
         \frac{h^3}{2} I & h^2 I 
    \end{bmatrix} m^2s^{-4}, 
\end{equation}
where $h=1s$ is the sampling time. The target birth intensity $\gamma_k$ is chosen as
\begin{equation}
    \gamma _k =  0.2 \sum_{l=1}^{10}\mathcal N\left(m_{\gamma}^{(l)}, \textrm{diag}\big(\begin{bmatrix} 100 & 100 & 25 & 25\end{bmatrix}^\intercal\big)\right)
\end{equation}
where given a vector $v$, $\textrm{diag}(v)$ denotes the diagonal matrix whose diagonal elements are the components of $v$.
The mean vectors $m_{\gamma, k}^{(l)}$ represent the initial guesses of the position and velocity vectors of the targets. The mean vectors of the initial positions are as depicted in Figure \ref{fig:trajectories} by the `$\triangle$' symbols, whereas the mean vectors of the initial velocities are set to $[0ms^{-1}\ 0ms^{-1}]^\intercal$.
Given that a target has the state $\zeta$ at a given timestep, the intensity corresponding to its spawned targets is given by
\begin{equation}
    \sigma_{k|k-1}(\cdot \vert \zeta) =  0.1\hspace{3pt}\mathcal N\Big(\zeta, \text{diag}\big(\begin{bmatrix} 100 & 100 & 400 & 400 \end{bmatrix}^\intercal\big)\Big)
\end{equation}
The survival probability $p_{k}^S(x)$ of each target is chosen according to the model described in \cite{sun2023gaussian}, such that $p_{k}^S(x)$ is close to $1$ when the target is inside the surveillance region, and close to $0$ otherwise. 

The trajectories of the targets are shown in Fig.\ref{fig:trajectories}, and the start and end times of their trajectories are given in Table \ref{table_start_end_trajectories}. Figure \ref{truth_cardinality} shows the number of targets at each timestep of the simulation, which varies between $6$ to $9$.

\begin{figure}[h]
\centering    
\includegraphics[width=0.39\textwidth]{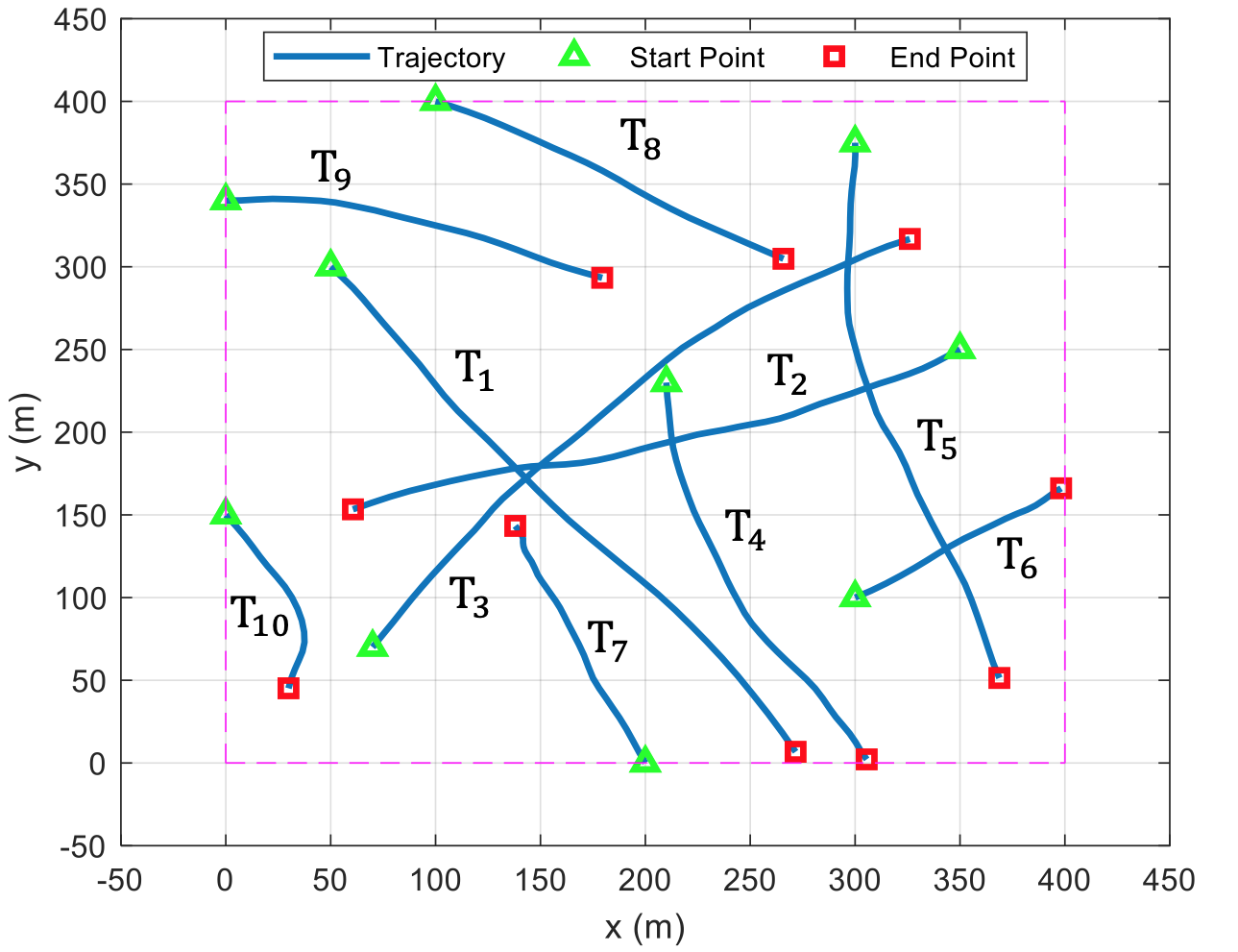}    
    \caption{The true trajectories of the $10$ targets (which are labeled as $T_1, T_2, \dots, T_{10}$) are shown by the solid lines. The start and end points of each trajectory are denoted by the symbols `$\triangle$' and `$\square$', respectively. The dashed line represents the boundary of the surveillance region.} 
    \label{fig:trajectories}
\end{figure}
\begin{figure}[h]
\centering
\includegraphics[width=0.5\textwidth]{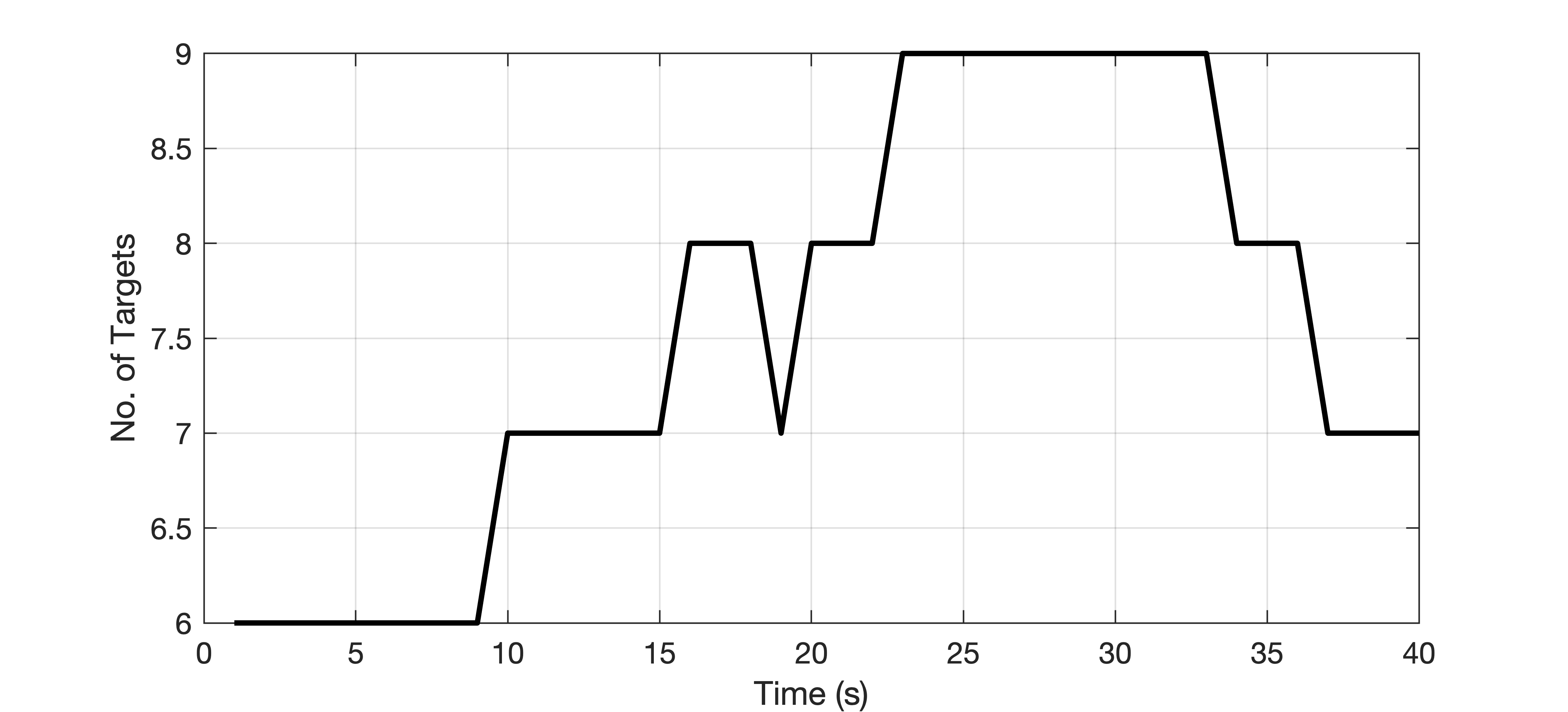}    
    \caption{Total number of targets (i.e., $|X_k|$) over time.} 
    \label{truth_cardinality}
\end{figure}

\begin{table}[h]
\centering
\caption{Starting and ending timesteps of the target trajectories}
\label{table_start_end_trajectories}
    \begin{tabular}{|c|c|c|c|c|c|c|} 
        \cline{1-3} \cline{5-7}
        Target & Start ($s$) & End ($s$) && Target  & Start ($s$) & End ($s$) \\ 
        \cline{1-3} \cline{5-7}
        $T_1$ & 1 & 34 && $T_6$ & 1 & 19 \\ 
        \cline{1-3} \cline{5-7}
        $T_2$ & 1 & 40 && $T_7$ & 10 & 40 \\ 
        \cline{1-3} \cline{5-7}
        $T_3$ & 1 & 40 && $T_8$ & 20 & 40 \\ 
        \cline{1-3} \cline{5-7}
        $T_4$ & 1 & 37 && $T_9$ & 16 & 40 \\ 
        \cline{1-3} \cline{5-7}
        $T_5$ & 1 & 40 && $T_{10}$ & 23 & 40 \\ 
        \cline{1-3} \cline{5-7}
    \end{tabular}
\end{table}


The targets are observed using a sensor network having $6$ sensors, over a  $400m\times400m$ surveillance region. The sensor network topology is assumed to be bidirectional, as shown in Figure \ref{fig_network}. 
\begin{figure}[t]
\centering
    \includegraphics[width=0.2\textwidth]{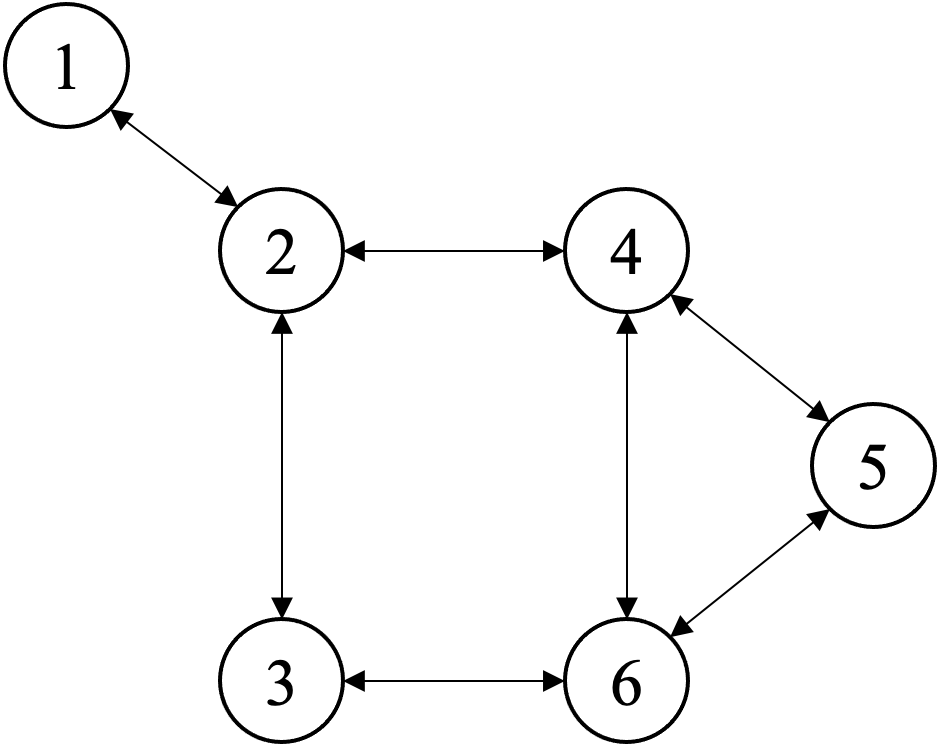}
    \caption{Directed graph of the sensor network with 6 sensors.} 
    \label{fig_network}
\end{figure}
The consensus weights, $\Omega_{ij}$, are chosen as per the \textit{Metropolis} rule described in \cite{xiao2006distributed} (which is commonly used in the literature on distributed PHD filtering \cite{li2018cardinality, li2020parallel}), giving us
\begin{equation}
    \Omega = \begin{bmatrix}
        0.8 & 0.2 & 0 & 0 & 0 & 0\\
        0.2 & 0.4 & 0.2 & 0.2 & 0 & 0\\
        0 & 0.2 & 0.6 & 0 & 0 & 0.2\\
        0 & 0.2 & 0 & 0.4 & 0.2 & 0.2\\
        0 & 0 & 0 & 0.2 & 0.6 & 0.2\\
        0 & 0 & 0.2 & 0.2 & 0.2 & 0.4
    \end{bmatrix}.
\end{equation}

Similar to the target survival probability, the detection probability $p_{i,k}^D(x)$ is set to $0.98$ within the surveillance region and $0$ outside it. The measurement is given by (\ref{eq:measurement_model}), where
\begin{equation}
    H_{k} = \begin{bmatrix}
        1 & 0 & 0 & 0\\
        0 & 1 & 0 & 0
    \end{bmatrix}\ \ \text{and}\ \ 
    R_{k} = \begin{bmatrix}
        25 & 0\\
        0 & 25
    \end{bmatrix}m^2.
\end{equation}
The clutter is modeled as a Poisson RFS having the intensity
\begin{equation}
    \kappa_{k}(z) = \lambda_{c} A u(z)
\end{equation}
where $u(\cdot)$ is the uniform distribution over the surveillance region (whose integral is equal to $1$), $A=1.6 \times 10^5 m^2$ is the area of the surveillance region, and $\lambda_{c}$ is the average number of clutter measurements per unit volume, with $\lambda_{c}=3.125\times10^{-5}m^{-2}$. As the integral of $\kappa_{k}(z)$ over the surveillance region is $5$, an average of $5$ clutter measurements are generated
at each sensor at each timestep.

In addition to the simulation parameters given above, the local GM-PHD filter has additional parameters related to the pruning of GMs at each timestep, whose description can be found in \cite[Table II]{vo2006gmphd}. The pruning threshold is set to $10^{-5}$, the merging threshold is set to $15$, and the maximum number of GM components is set to $50$. At each sensor, the GM components whose weights are greater than or equal to $0.5$ are identified as the targets. In the literature, \textit{target extraction} refers to the process by which, at each sensor, the GM components whose weights are higher than a given threshold are identified as the targets. The threshold for the extraction of targets is set to $0.5$.

\subsection{Results}
The Optimal Sub-Pattern Assignment (OSPA) distance \cite{schuhmacher2008consistent} is used to quantitatively evaluate the multi-target tracking performance of the proposed algorithm at each sensor, with the order parameter of OSPA chosen as $1$ and the cut-off parameter chosen as $100$. Generally, a high OSPA value is indicative of one or more of the following: (i) a large amount of discrepancy between the extracted target states and the true states of the targets, (ii) under-estimation of the number of targets (e.g., due to missed detections), or (iii) over-estimation of the targets (e.g., due to false positives).
We define the \textit{network OSPA} distance as the average of the OSPA distances attained at all the sensors in the network at each timestep. Furthermore, the \textit{time-averaged network OSPA} distance is the average of the network OSPA distances across all the timesteps of the simulation.
In the following simulation results, each of the metrics is averaged over 100 Monte Carlo simulations; the number of targets, their initial positions, and their starting/ending timesteps (as given in Table \ref{table_start_end_trajectories}) are kept fixed across the simulations, whereas the process and measurement noise as well as the clutter measurements are randomly sampled in each simulation.

Figure \ref{fig_timeAvg_OSPA_B5_t0_t6} shows the time-averaged network OSPA distance for each algorithm, as a function of the number of inter-sensor fusion steps, $\alpha$, which is varied between $0$ and $6$. It can be observed that each method tends to achieve better multi-target tracking performance as $\alpha$ is increased. In particular, the proposed method achieves a multi-target tracking performance that is close to that of the full consensus method, while having a communication cost that is at most that of the partial consensus method borrowed from the literature. 

\begin{figure}[h]
\centering
    \includegraphics[width=0.51\textwidth]{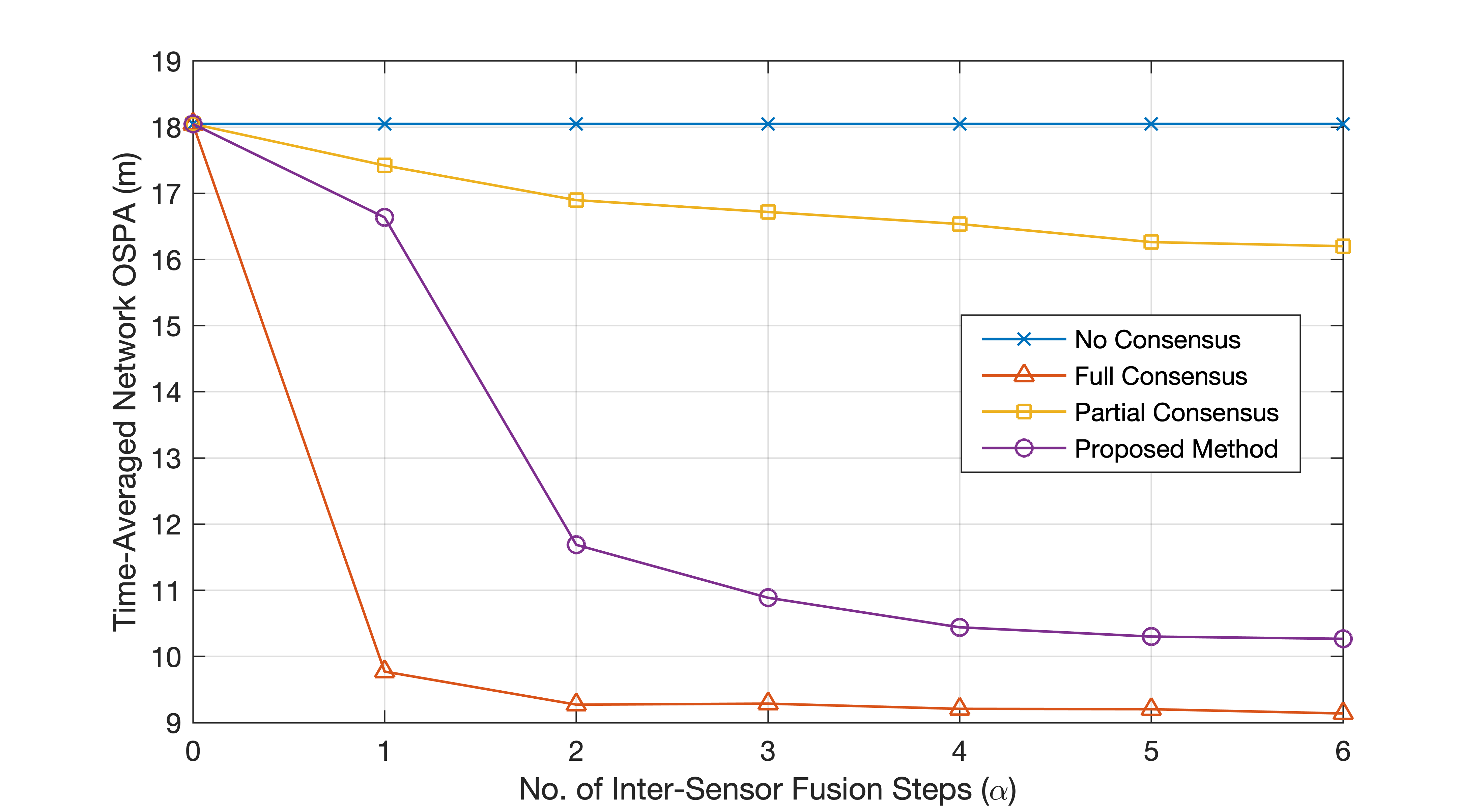}
    \caption{Time-averaged network OSPA distances plotted as a function of $\alpha$, which represents the number of inter-sensor fusion steps used.}
    \label{fig_timeAvg_OSPA_B5_t0_t6}
\end{figure}

The individual OSPA distances of sensor 6 are plotted in Figure \ref{fig_localOSPA_sensor6} and the network OSPA distances are plotted in Figure \ref{fig_networkOSPA}. Figures \ref{fig_localOSPA_sensor6} and \ref{fig_networkOSPA} show that the OSPA distances of sensor 6 are close to the corresponding network average values. 
In each algorithm, the OSPA distances are higher in the timesteps following a change in the number of targets present in the surveillance region; this can be observed by comparing Figure \ref{fig_localOSPA_sensor6} with Figure \ref{truth_cardinality}. At the other timesteps, the OSPA distances of the proposed algorithm are considerably lower than those of the partial consensus method, irrespective of the number of inter-sensor fusion steps used. As discussed in Section \ref{sec:comm_constr}, the improved performance of the proposed distributed GM-PHD algorithm can be attributed to the fact that, when using the random sampling rule, each GM component of a given sensor has a non-zero probability of being transmitted to its neighbors. In contrast, a sensor using the partial consensus method only transmits the top $B$ GM components (in terms of their weights) to its neighbors, where $B=5$ in this simulation scenario.

\begin{figure}[h]
    \centering
    \begin{subfigure}[b]{0.239\textwidth}
    \centering
    \includegraphics[width=\textwidth]{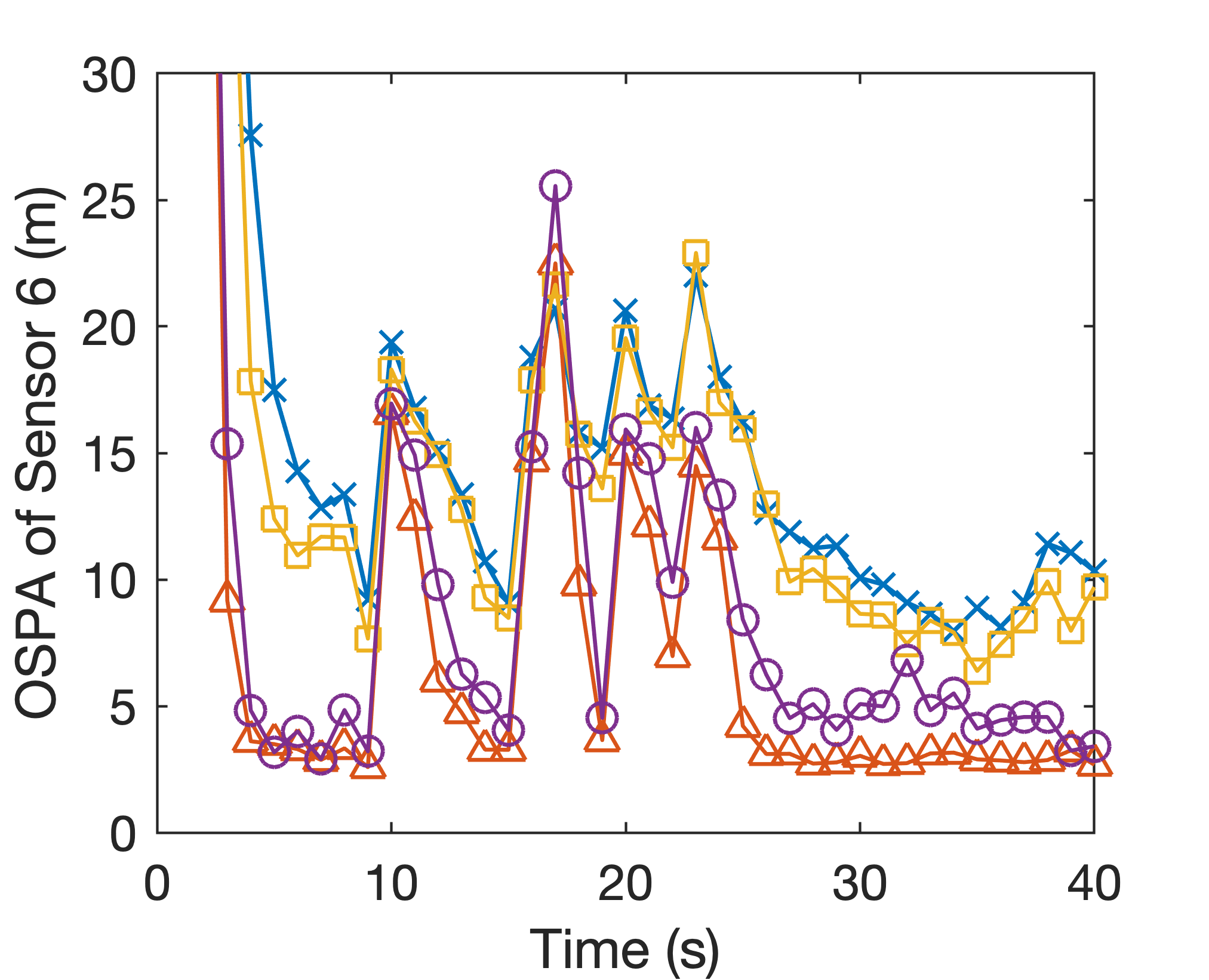}
    \caption{$\alpha=1$}
    \label{fig_localOSPA_B5_t1_sensor6}
    \end{subfigure}
    \begin{subfigure}[b]{0.241\textwidth}
    \centering
    \includegraphics[width=\textwidth]{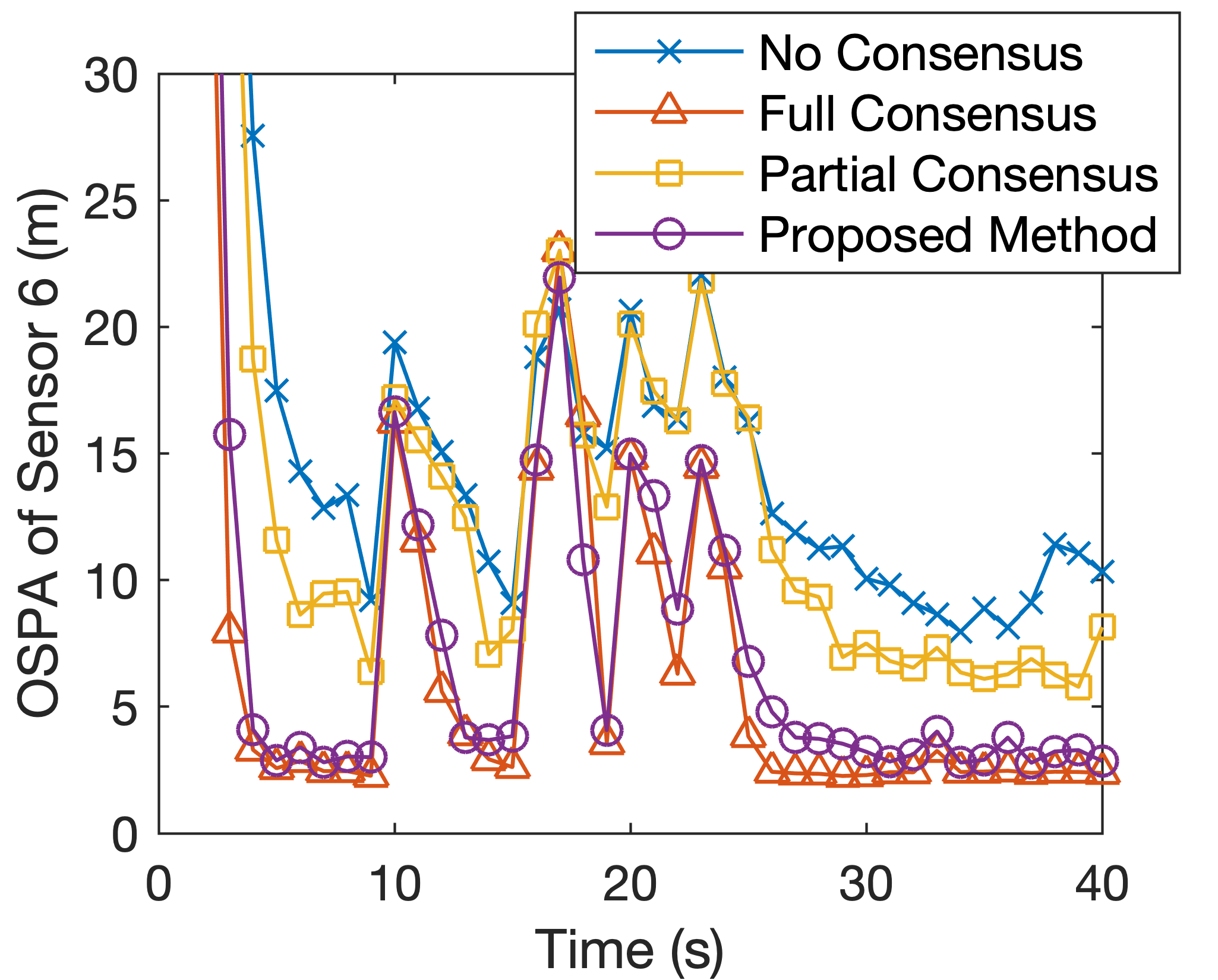}
    \caption{$\alpha=6$}
    \label{fig_localOSPA_B5_t6_sensor6}
    \end{subfigure}
    \caption{The OSPA distances attained at sensor 6 at each timestep of the simulation, where $\alpha$ represents the number of inter-sensor fusion steps used.}
    \label{fig_localOSPA_sensor6}
\end{figure}

\begin{figure}[h]
    \centering
    \begin{subfigure}[b]{0.237\textwidth}
    \centering
    \includegraphics[width=\textwidth]{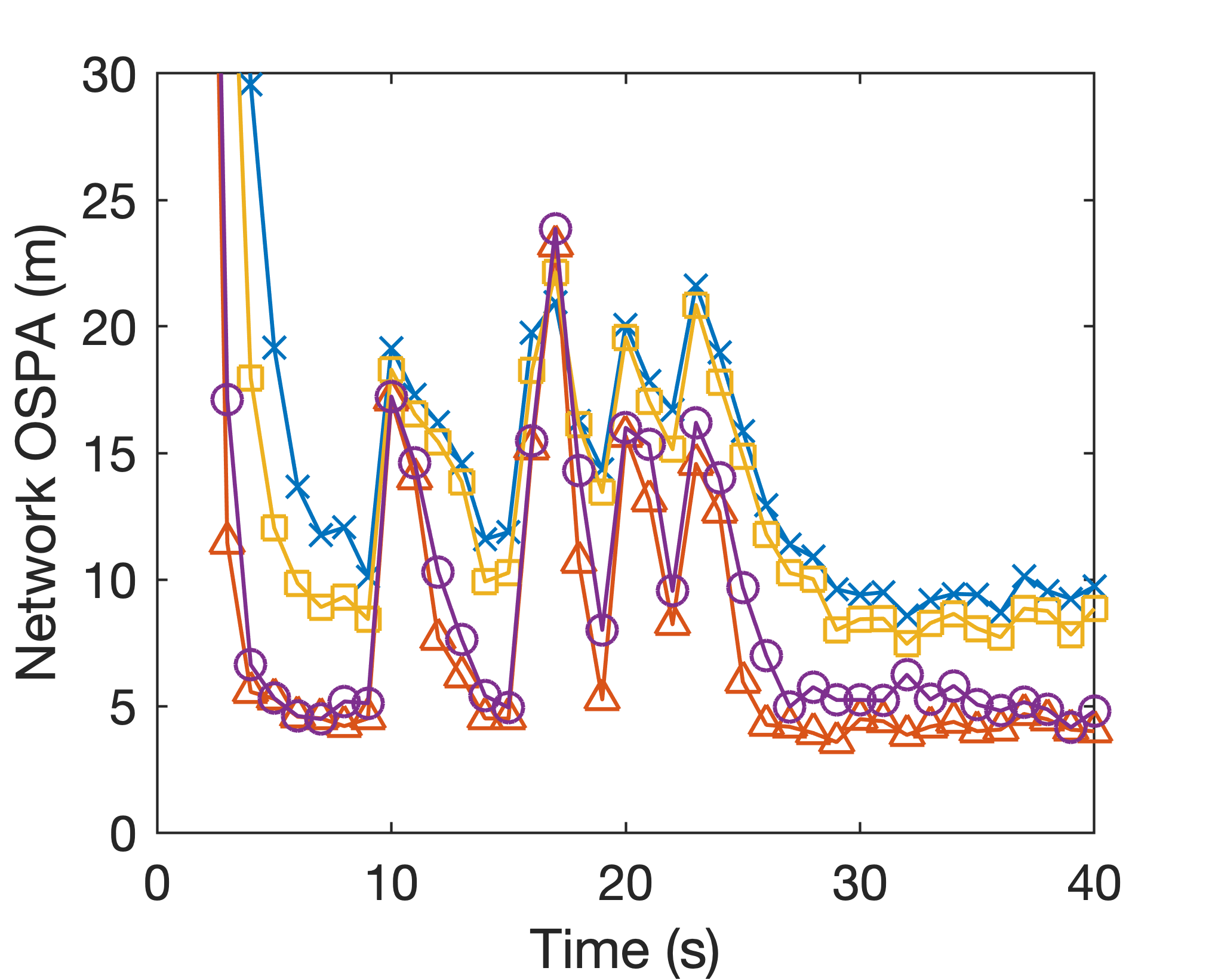}
    \caption{$\alpha=1$}
    \label{fig_networkOSPA_B5_t1}
    \end{subfigure}
    \begin{subfigure}[b]{0.243\textwidth}
    \centering
    \includegraphics[width=\textwidth]{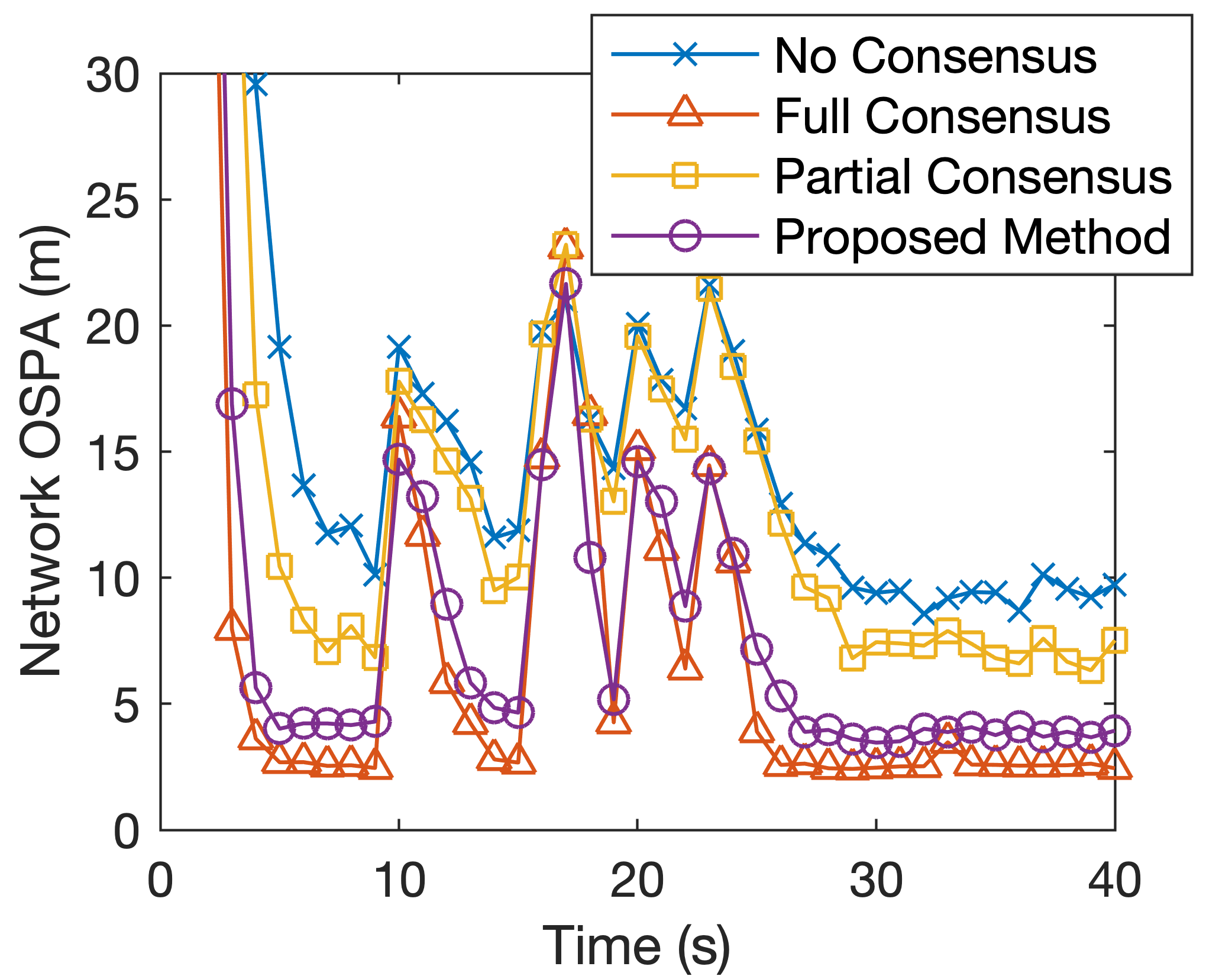}
    \caption{$\alpha=6$}
    \label{fig_networkOSPA_B5_t6}
    \end{subfigure}  
    \caption{The network OSPA distances attained at each timestep of the simulation, where $\alpha$ represents the number of inter-sensor fusion steps used.}
    \label{fig_networkOSPA}
\end{figure}

The computation time required for each filtering step of the algorithms considered in this section is shown in Figure \ref{fig_computeTime_B5}. It can be seen that, while the proposed method has a communication cost that is at most that of the partial consensus method, its computational cost is slightly higher than that of the latter. This can be attributed to the increased complexity of the random sampling rule as compared to the rank rule described in Section \ref{sec:comm_constr}. 
At the same time, the computational cost of the proposed algorithm is lower than that of the full consensus method, due to the lower number of GM components that need to be pruned and fused in the proposed method. Thus, the proposed distributed GM-PHD method strikes a desirable trade-off between computational cost, communication cost, and multi-target tracking performance.

\begin{figure}[h]
\centering
    \includegraphics[width=0.5\textwidth]{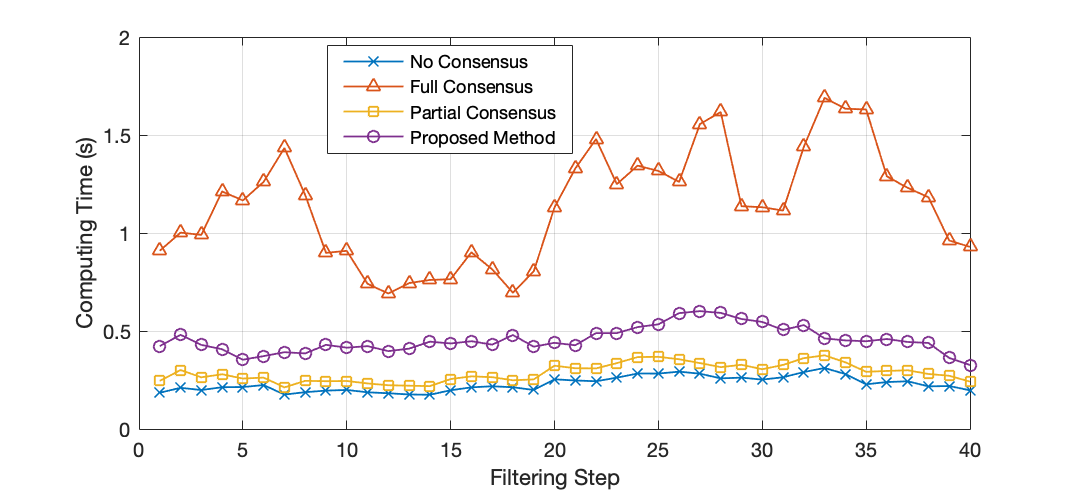}
    \caption{The computation time required for each filtering step of the algorithms considered in Section \ref{sec:simulation}.}
    \label{fig_computeTime_B5}
\end{figure}

\section{Conclusion}
\label{sec:conclusion}

In this paper, an asymptotically optimal distributed Gaussian-Mixture Probability Hypothesis Density (GM-PHD) filter was developed for multi-sensor multi-target tracking under communication constraints. Unlike the existing implementations of distributed GM-PHD filtering given in the literature, the proposed method is able to guarantee asymptotic convergence to an optimal multi-target state estimate, which corresponds to the weighted arithmetic average (WAA) of the posterior densities of the sensors. A novel result related to the consensus of functions in $L^p$ spaces was developed in order to establish the convergence properties of the consensus-based GM-PHD filtering method. Subsequently, a random sampling rule was developed, which uses probabilistic sampling of GM components to limit the amount of information that is transmitted between sensors in the inter-sensor fusion step, thereby limiting the communication cost of the proposed method. Through numerical simulations, it was shown that the proposed method (which uses the random sampling rule) exhibits a much better multi-target tracking performance than a comparable algorithm from the literature, while requiring the same (or lesser) amount of communication, thereby striking a desirable trade-off between the aforementioned metrics that is well-suited for distributed sensor networks having communication constraints.

\bibliographystyle{plain}
\bibliography{refs}

\end{document}